\newcommand{\Tr}{\mathrm{Tr}}
\newcommand{\tr}{\Tr}
\newcommand{\ket}[1]{\ensuremath{|#1\rangle}}
\newcommand{\bra}[1]{\ensuremath{\langle#1|}}
\newcommand{\ketbra}[2]{\ensuremath{\ket{#1}\bra{#2}}}
\newcommand{\braket}[2]{\ensuremath{\langle{#1}|{#2}\rangle}}
\newcommand{\VV}{\mathcal{V}}
\newcommand{\1}{{\rm 1\hspace{-0.9mm}l}}
\newcommand{\Id}{\1}
\newcommand{\ie}{{\emph{i.e.\/}}}
\newcommand{\E}{\mathbb{E}}
\newcommand{\F}{\mathcal{F}}
\newcommand{\N}{\mathcal{N}}
\newcommand{\Z}{\mathbb{Z}}
\newcommand{\R}{\mathbb{R}}
\newcommand{\etal}{\emph{et al.}}
\newcommand{\XX}{\mathcal{X}}
\newcommand{\YY}{\mathcal{Y}}
\newcommand{\ZZ}{\mathcal{Z}}
\newcommand{\ImConj}{\mathrm{Im}(\Phi^\dagger)}
\newcommand{\KerPhi}{\mathrm{Ker}(\Phi)}
\newcommand{\Ker}{\mathrm{Ker}}
\newcommand{\Image}{\mathrm{Im}}
\newtheorem{theorem}{Theorem}
\newtheorem{lemma}{Lemma}
\newtheorem{definition}{Definition}
\newenvironment{proof}[1][Proof]{\noindent\textbf{#1.} }{\hfill 
\rule{0.5em}{0.5em}}
\title{Central limit theorem for reducible and irreducible open quantum walks}
\author{Przemys{\l}aw Sadowski\thanks{psadowski@iitis.pl}\;}
\author{{\L}ukasz Pawela\thanks{lpawela@iitis.pl}}
\affil{Institute of Theoretical and Applied Informatics, Polish Academy of 
Sciences, Ba{\l}tycka 5, 44-100 Gliwice, Poland}
\begin{document}
\maketitle

\begin{abstract}
In this work we aim at proving central limit theorems for open quantum walks on
$\mathbb{Z}^d$. We study the case when there are various classes of vertices in
the network. Furthermore, we investigate two ways of distributing the vertex
classes in the network. First we assign the classes in a regular pattern.
Secondly, we assign each vertex a random class with a uniform distribution. For
each way of distributing vertex classes, we obtain an appropriate central limit
theorem, illustrated by numerical examples. These theorems may have application
in the study of complex systems in quantum biology and dissipative quantum
computation.
\end{abstract}

\section{Introduction}\label{sec:introduction}
In a series of recent papers
\cite{attal12open,attal2012open,sinayskiy2013open,sweke2013dissipative,sinayskiy2012properties,pawela2014generalized}
various aspects of open quantum walks have been discussed. This is a novel and 
very promising approach to the quantum walks.
Quantum walks have long been studied~\cite{Reitzner, 
Ambainis2008quantum, Ampadu, Spatiotemporal1201.4839, percolation} and have 
numerous applications, such as: search algorithms \cite{grover1997search, 
kempe2003walksearch, portugal2013quantum, childs2004spatialsearch, 
sadowski04efficient}, quantum agents~\cite{miszczak2014magnus} and quantum 
games~\cite{flitney2002introduction,piotrowski2003invitation,pawela2013cooperative,pawela2013quantum}.
Open walks generalize this well studied model and in particular allow one to
incorporate decoherence which is an always present factor when considering
quantum systems.
The importance of managing decoherence has motivated the study of this problem 
in numerous fields, such as quantum 
control~\cite{dahleh1990optimal,viola1999universal,viola2003robust,james2004risk,d2006quantum,dong2009sliding,pawela2014quantum,pawela2014quantum2,gawron2014decoherence,pawela2013various},
quantum 
games~\cite{du2002experimental,flitney2005quantum,flitney2007multiplayer,pawela2013enhancing,gawron2014relativistic}and
 quantum walks~\cite{grover_fail, Franco1303.5319, Kendon, kendon0209005v3, 
ampadu0localization, Chandrashekar_few_ie2d }.

In this work we analyze the asymptotic behavior of open quantum walks.
Especially we consider the possibility to determine the time limit properties
of walks with non-homogeneous structure. The theorems for the homogeneous case
are proven~\cite{attal2012central}. In this work we consider two different
approaches: the possibility to reduce the walk to the homogeneous one and
provide walk's asymptotic properties as it is. In the first case, we construct
a set of rules and methods that allows to determine when it is possible to
reduce a walk. In the second case, we state a new central limit theorem that
allows us to derive asymptotic distribution under certain conditions. We
illustrate this approaches with appropriate numerical examples.

\section{Preliminaries}\label{sec:oqw}
\subsection{Quantum states and channels}
\begin{definition}
We call an operator $\rho \in L(\XX)$, for some Hilbert space $\XX$, a density
operator iff $\rho \geq 0$ and $\tr \rho = 1$. We denote the set of all density
operators on $\XX$ by $\Omega(\XX)$.
\end{definition}

\begin{definition}
A superoperator $\Phi$ is a linear mapping acting on linear operators $L(\XX)$
on a finite dimensional Hilbert space $\XX$ and transforming them into
operators on another finite dimensional Hilbert space $\YY$ i. e.
\begin{equation}
	\Phi: L(\XX) \rightarrow L(\YY).
\end{equation}
\end{definition}

\begin{definition}
Given superoperators 
\begin{equation}
\Phi_1: L(\XX_1) \rightarrow L(\YY_1), \ \Phi_2: L(\XX_2) 
\rightarrow 
L(\YY_2),
\end{equation}
we define the product superoperator
\begin{equation}
	\Phi_1 \otimes \Phi_2: L(\XX_1 \otimes\XX_2) 
	\rightarrow L(\YY_1\otimes \YY_2),
\end{equation}
to be the unique linear mapping that satisfies:
\begin{equation}
(\Phi_1 \otimes \Phi_2)(A_2 \otimes  A_2) = 
\Phi_1(A_1) \otimes \Phi_2(A_2),
\end{equation}
for all operators $A_1\in L(\XX_1), A_2 \in L(\XX_2)$. The extension for 
operators not in the tensor product form follows from linearity.
\end{definition}

\begin{definition}\label{def:channel}
A quantum channel is a superoperator $\Phi: L(\XX) \rightarrow L(\YY)$ that
satisfies the following restrictions:
\begin{enumerate}
\item $\Phi$ is trace-preserving, i.e. $\forall {A \in L(\XX)} \ 
\tr(\Phi(A))=\tr(A)$,

\item \label{item:CP}$\Phi$ is completely positive, that is for every 
finite-dimensional Hilbert space $\ZZ$ the product of $\Phi$ and an identity 
mapping on $L(\ZZ)$ is a non-negativity preserving operation, i.e.
\begin{equation}
\forall {\ZZ} \ \forall {A \in L(\XX \otimes \ZZ)}, \ {A \geq 0} \ 
(\Phi\otimes \1_{L(\ZZ)})(A) \geq 0.
\end{equation}
\end{enumerate}
\end{definition}
Note that quantum channels map density operators to density operators.

\begin{definition}
The Kraus representation of a quantum channel $\Phi: L(\XX) \rightarrow L(\YY)$
is given by a set of operators  $K_i \in L(\XX, \YY)$. The action of the 
superoperator  $\Phi$ on $A \in L(\XX)$ is given by:
\begin{equation}
\Phi(A)=\sum_i K_i A K_i^\dagger,
\end{equation}
with the restriction that
\begin{equation}
	\sum_i K_i^\dagger K_i=\1_{\XX}.
\end{equation}
\end{definition}

\begin{definition}
Given a superoperator $\Phi: L(\XX) \rightarrow L(\YY)$, for every operator $A
\in L(\XX), B \in L(\YY)$ we define the conjugate superoperator $\Phi^\dagger:
L(\YY) \rightarrow L(\XX)$ as the mapping satisfying
\begin{equation}
\forall A \in L(\XX) \; \forall B \in L(\YY) \quad \Tr (\Phi(A)B) = \Tr (A 
\Phi^\dagger(B)). \label{eq:conjugate-channel}
\end{equation}
\end{definition}
Note, that the conjugate to a completely positive superoperator is completely 
positive, but is not necessarily trace-preserving.

\subsection{Open quantum walks}

The model of the open quantum walk was introduced by Attal \etal\
\cite{attal12open} (see also \cite{sinayskiy12open}). To introduce the open
quantum walk (OQW) model, we consider a random walk on a graph with the set
of vertices $V$ and directed edges $\{(i, j): \; i, j \in V\}$. The
dynamics on the graph is described in the space of states $\VV =
\mathbb{C}^V$ with an orthonormal basis $\{ \ket{i} \}_{i \in V}$. We model
an internal degree of freedom of the walker by attaching a Hilbert space
$\XX$ to each vertex of the graph. Thus, the state of the quantum walker is
described by an element of the space $\Omega(\XX \otimes \VV)$.

To describe the dynamics of the quantum walk, for each directed edge $(i, j)$ 
we introduce a set of  operators $\{K_{ijk} \in L(\XX)\}$. These 
operators 
describe the change in the internal degree of freedom of the walker due to the 
transition from vertex $j$ to vertex $i$. Choosing the operators $K_{ijk}$ such 
that
\begin{equation}
\sum_{ik} K_{ijk}^\dagger K_{ijk} = \1_{\XX},
\end{equation}
we get a Kraus representation of a quantum channel for each vertex $j \in V$ of 
the graph. As the operators $K_{ijk}$ act only on $\XX$, we introduce the 
operators $M_{ijk} \in L(\XX \otimes \VV)$
\begin{equation}
M_{ijk} = K_{ijk} \otimes \ketbra{i}{j},\label{eq:kraus-channel}
\end{equation}
where$\ket{i}, \ket{j} \in \VV$ which perform the transition from vertex $j$ to
vertex $i$ and internal state evolution. It is straightforward to check that 
$\sum_{ijk} M_{ijk}^\dagger
M_{ijk} = \1_{\XX \otimes \VV}$.
\begin{definition}
A discrete-time open quantum walk is given by a quantum channel $\Phi: L(\XX
\otimes \VV) \rightarrow L(\XX \otimes \VV)$ with the Kraus representation
\begin{equation}
\forall A \in L(\XX \otimes \VV) \quad \Phi(A) = \sum_{ijk} M_{ijk} A 
M_{ijk}^\dagger,
\end{equation}
where operators $M_{ijk} \in L(\XX \otimes \VV)$ are defined in
Eq.~\eqref{eq:kraus-channel}.
\end{definition}

\subsection{Asymptotic behavior of open quantum walks}

Recently Attal \etal \cite{attal2012central} provided a description of
asymptotic behavior of open quantum walks in the case when the behavior of
every vertex is the same i. e. all vertices belong to one class. We call such
networks homogeneous.

In order to describe asymptotic properties of an open quantum walk we will 
use the notion of quantum trajectory process associated with this open quantum 
walk.
\begin{definition}
We define the quantum trajectory process as a classical Markov chain assigned
to an open quantum walk constructed as a simulation of the walk with
measurement at each step. The initial state is  $(\rho_0, X_0)\in \Omega(\XX)
\times \mathbb{Z}^d$ with probability 1. The state $(\rho_n, X_n)$ at step $n$
evolves into one of the $2d$ states corresponding to possible directions
$\Delta_j$, $j=\pm1,\ldots,\pm d$:
\begin{equation}
\left(\frac{1}{p_j}K_j \rho K_j^\dagger, X_n+\Delta_j\right),
\end{equation}
with probability $p_j=\Tr(K_j\rho K_j^\dagger)$. We also separately define a 
Markov chain $(\rho, \Delta X)$ and a transition
operator associated with this trajectory process
\begin{equation}\label{eq:definition:P}
P[(\rho, \Delta_i), (\rho', \Delta_j)]= 
\left\{\begin{array}{ll}
\Tr(K_j\rho K_j^\dagger) 
& \mathrm{if} \rho'=\frac{K_j \rho K_j^\dagger}{Tr(K_j \rho 
K_j^\dagger)}, \\
0 & \mathrm{else}.
\end{array}\right.
\end{equation}
\end{definition}

We define an auxiliary channel $\Phi: L(\XX) \rightarrow L(\XX)$ that mimics
the behavior of the walk when all the internal states are the same as
\begin{equation}
\Phi(\rho)=\sum_{j=1}^{2d} K_j \rho K_j^\dagger.\label{eq:aux-channel}
\end{equation}
We assume that the channel has a unique invariant state $\rho_\infty 
\in 
\Omega(\XX)$. 
Additionally we define a vector that approximates the estimated asymptotic 
transition for the channel $\Phi$:
\begin{equation}
\ket{m}=\sum_{j=1}^{2d}\Tr(K_j\rho_{\infty}K_j^\dagger)\ket{j},\label{eq:aux-vector}
\end{equation}
where $\ket{j} \in \R^d$ and for $j>d$ we put $\ket{j}=-\ket{j-d}$.

Let us recall the theorem by Attal \etal~\cite{attal2012central}. First we 
recall a simple lemma:
\begin{lemma}
For every $\ket{l} \in \R^d$ and channel $\Phi$ with associated Kraus operators 
$\{ K_1, \ldots, K_{2d}\}$, the equation
\begin{equation}
(L_l - \Phi^\dagger(L_l)) = \sum_{i=1}^{2d} K_i^\dagger K_i \ketbra{i}{l} - 
\ketbra{m}{l} \1
\end{equation}
admits a solution
\end{lemma}
We will write $L_i$ instead of $L_l$ for $\ket{l} = \ket{i}$.
Now, we can state the theorem
\begin{theorem}\label{th:original}
Consider a open quantum walk on $\mathbb{Z}^d$ associated with transition
operators $\{K_1, \ldots, K_{2d}\}$. We assume that a channel  $\Phi$ admits a
unique invariant state. Let $(\rho_n, X_n)_{n\ge 0}$ be the quantum trajectory
process associated with this open quantum walk, then
\begin{equation}
\lim\limits_{n \rightarrow \infty} \frac{\E(\ket{X_n})}{n} = \ket{m},
\end{equation}
and probability distribution of normalized random variable $X_n$
\begin{equation}
\frac{\ket{X_n}-n \ket{m}}{\sqrt{n}},
\end{equation}
converges in law to the Gaussian distribution $\mathcal{N}(0,C)$ in 
${\mathbb{R}}^d$, with the covariance matrix
\begin{equation}\
\begin{split}
C_{ij}=&\delta_{ij} \left( \tr(K_i \rho_\infty K_i^\dagger) + \tr(K_{i+d} 
\rho_\infty K_{i+d}^\dagger) \right) - m_i m_j + \\
& + \left( \Tr(K_i \rho_\infty K_i^\dagger L_j) + \Tr(K_j \rho_\infty 
K_j^\dagger L_i) \right. \\
& \left. - \Tr(K_{i+d} \rho_\infty K_{i+d}^\dagger L_j) - \Tr(K_{j+d} 
\rho_\infty K_{j+d}^\dagger L_i) \right) \\
& \left( - m_i \tr(\rho_\infty L_j) - m_j \tr(\rho_\infty L_i) \right).
\end{split}
\end{equation}
\end{theorem}

\section{Results}

We are mainly interested in open quantum walks that are defined on networks
with many classes of vertices. In this paper we assume that there is a finite
number of vertex classes $\Gamma=\{C_1, \ldots, C_n\}$. The transitions in each
vertex is given by Kraus operators defined for each class separately
$\{K_1^{c(X)}, \ldots, K_{2d}^{c(X)}\}_{X\in \Z^d} \subset L(\XX)$, where
$c(X)\in\Gamma$ is class of the vertex $X\in\Z^d$, Such that $\sum_j
\left(K_j^C \right)^\dagger K_j^C=\1_{\XX}$. We define a transition operator of
the Markov chain as in Eq.~\eqref{eq:definition:P}:
\begin{equation}\label{eq:pc}
P_C[(\rho, \Delta_i), (\rho', \Delta_j)]= 
\left\{\begin{array}{ll}
\Tr\left(K_j^C\rho \left( K_j^C \right)^\dagger\right) 
& \mathrm{if} \rho'=\frac{K_j^C\rho \left( K_j^C 
\right)^\dagger}{\Tr\left(K_j^C 
\rho 
\left( K_j^C \right)^\dagger\right)}, \\
0 & \mathrm{else}.
\end{array}\right.
\end{equation}
Next, we define a channel $\Phi^C$ for each class $C$ as in 
Eq.~(\ref{eq:aux-channel})
\begin{equation}
\Phi^C(\rho)=\sum_{j=-d, j\ne 0}^{d} K_j^C \rho (K_j^C)^\dagger.\label{eq:phic}
\end{equation}
Again, we assume that $\Phi^C$ has a unique invariant state $\rho_\infty^C \in 
\Omega(\XX)$. Additionally for each class $C$ we define a vector as in 
Eq.~(\ref{eq:aux-vector})
\begin{equation}
\ket{m_C}=\sum_{j=1}^{2d} 
\Tr(K_j^C\rho_{\infty}^C(K_j^C)^\dagger)\ket{j}\label{eq:mc},
\end{equation}
where $\ket{j} \in \R^d$ and for $j>d$ we put $\ket{j}=-\ket{j-d}$.

In order to provide a description of distribution evolution of open quantum 
walks on non-homogeneous networks we analyze two cases. First in 
Section~\ref{sec:reducible}, we model a walk with vertices defined in such a 
way that it is possible to reduce the network to the homogeneous case. Secondly 
in Section~\ref{sec:irreducible}, we study a network which is irreducible in 
the above sense but satisfies some basic properties that allow us to develop 
other techniques.

\subsection{Reducible open quantum walks}\label{sec:reducible}

Let us consider an open quantum walk with several classes of vertices. We aim
to analyze the possibility to construct a new walk that behaves the same way in 
the asymptotic limit.
\begin{definition}
We call an open quantum walk reducible if there is a class $A$ that for some
integer $l$ each $l$-step path from a vertex of type $A$ always  leads to a
vertex of type $A$.

\end{definition}
When considering a reducible OQW we can consider these paths as edges and
reduce the network to the homogeneous case.
\begin{definition}\label{def:abstract-class}
For a reducible quantum walk with $N$ possible paths we construct a new set of
Kraus operators $\{K_1^R, .., K_N^R\} \subset L(\XX)$ such that each operator
is a composition of all the operators corresponding to the consecutive steps
composing one of the paths from vertex $A$ to another vertex $A$, i. e. for a
path $q$ consisting of vertices $X_1, \dots, X_l$ and direction changes
$\Delta_1, \ldots, \Delta_l$ the corresponding operator is
\begin{equation}
K_q^R = K_{\Delta_l}^{C(X_l)}\cdot\ldots\cdot K_{\Delta_1}^{C(X_1)}, 
q=1,\ldots,N.
\end{equation}
We call the OQW based on these operators a reduced open quantum walk.
\end{definition}

The simplest example of a reducible open quantum walk is a walk on
$\mathbb{Z}^2$ presented in Fig.~\ref{fig:2d-reducible-network}. Starting in a
vertex of class $A$, after two steps we always end up in a vertex of class $A$.
We use that property to construct a new walk with only one vertex type and
exactly the same asymptotic behavior. In 
Fig.~\ref{fig:2d-reducible-network-big} we present a more complex example of a
network with these properties.

\begin{figure*}[!ht]
\centering\includegraphics{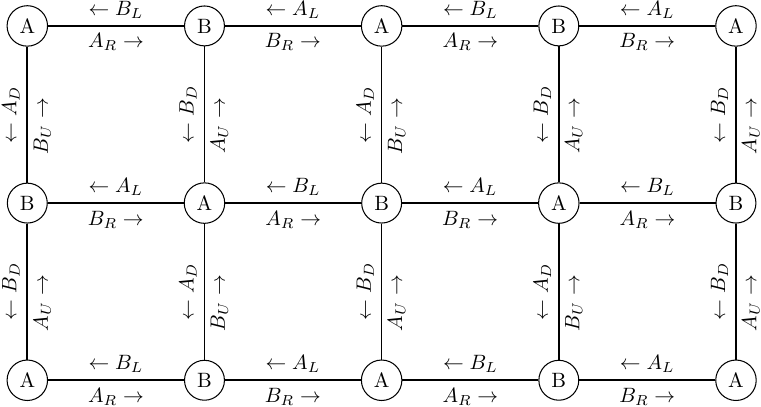}
\caption{An example of a 2D reducible OQW. The operators are defined in the 
text. The dashed lines show possible paths from one vertex of type $A$ to 
another 
vertex of this type.}\label{fig:2d-reducible-network}
\end{figure*}
\begin{figure*}[!ht]
\centering\includegraphics{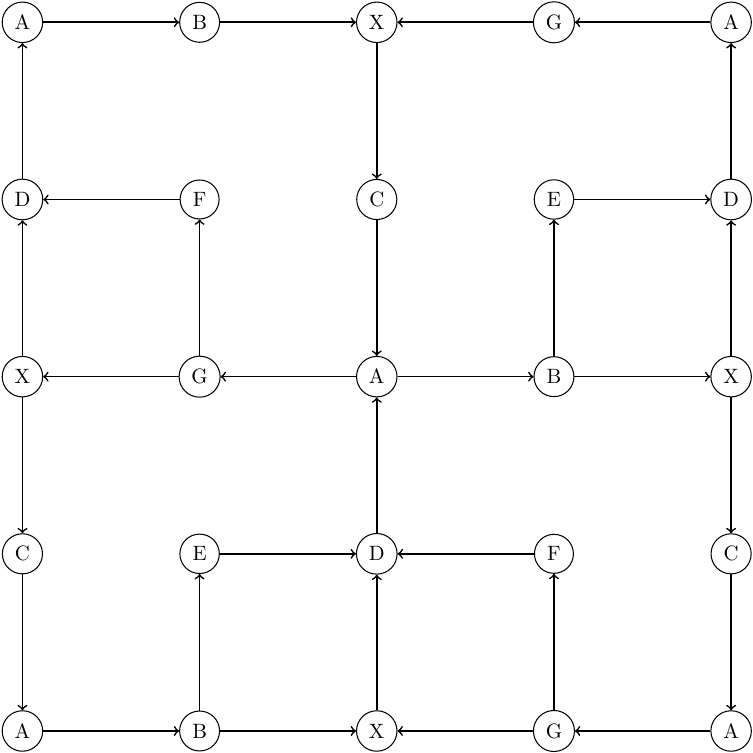}
\caption{An example of a 2D reducible OQW. The arrows show possible 
transitions. Each path from one vertex of type $A$ leads to another vertex of 
this 
type with exactly 4 steps.} 
\label{fig:2d-reducible-network-big}
\end{figure*}

\subsubsection{Central limit theorem and its proof}

\begin{theorem}\label{th:reducible}
Consider a reducible open quantum walk on $\mathbb{Z}^d$. By $P$ we denote the
abstract class of vertices constructed as described in
Definition~\ref{def:abstract-class}. We assume that a channel constructed with
these paths $\Phi^P$ has a unique invariant state $\rho_\infty \in \Omega(\XX)$ 
with 
average transition vector $\ket{m_P}$. Let $(\rho_n, X_n)_{n\ge 0}$ be the 
quantum trajectory process associated to this open quantum walk, then
\begin{equation}
\lim_{n\rightarrow\infty} \frac{\E( \ket{X_n})}{n}=\ket{m_P},
\end{equation}
and probability distribution of normalized random variable $X_n$
\begin{equation}
\frac{\ket{X_n}-n \ket{m_P}}{\sqrt{n}},
\end{equation}
converges in law to the Gaussian distribution in ${\mathbb{R}}^d$.
\end{theorem}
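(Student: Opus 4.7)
The plan is to apply Theorem~\ref{th:original} to the reduced homogeneous walk constructed in Definition~\ref{def:abstract-class} along the arithmetic progression $n \in l\Z$, and then to promote the conclusion to all $n$ by controlling the (deterministically bounded) displacement inside each block of $l$ consecutive steps.

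First I would identify the sub-sampled process $Y_k := X_{kl}$, together with the internal state $\rho_{kl}$, with the quantum trajectory process of the reduced walk. By the reducibility hypothesis, starting at a vertex of class $A$ every sequence of $l$ Kraus selections returns the walker to a vertex of class $A$, so $(\rho_{kl}, X_{kl})_{k\ge 0}$ is a time-homogeneous Markov chain whose one-step Kraus operators are precisely the composites $K_q^R$ and whose one-step displacements are the corresponding total path vectors $\sum_i \Delta_i$. Trace-preservation $\sum_q (K_q^R)^\dagger K_q^R = \1_{\XX}$ follows by iterating the original completeness relation outwards along the tree of length-$l$ paths rooted at $A$, so the reduced walk is a legitimate OQW on $\Z^d$.

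Next I would invoke Theorem~\ref{th:original} for the reduced walk, whose auxiliary channel $\Phi^P$ has a unique invariant state $\rho_\infty$ by hypothesis. This yields $\E(\ket{Y_k})/k \to \ket{m_P}$ (with the natural conversion factor between one reduced step and $l$ original steps absorbed into the definition of $\ket{m_P}$) together with the CLT that $(\ket{Y_k} - k \ket{m_P})/\sqrt{k}$ converges in law to a Gaussian on $\R^d$. Substituting $n = kl$ gives both statements for $\ket{X_n}$ along the subsequence $n \in l\Z$. To extend to arbitrary $n$, I would write $n = kl + r$ with $0 \le r < l$ and use the fact that each original Kraus step shifts position by a unit lattice vector to obtain the deterministic bound $\|\ket{X_n} - \ket{Y_k}\| \le l$, uniform in $n$. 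Division by $n$ kills this residue in the mean, and division by $\sqrt{n}$ kills it after centring, so Slutsky's theorem promotes both limits from the subsequence to the full sequence.

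The main technical point lies in the first step: one must verify that, conditional on $(\rho_{kl}, X_{kl})$, the joint law of the next $l$ Kraus outcomes factorises exactly as a single reduced Kraus step $K_q^R$ acting on $\rho_{kl}$, paired with the displacement $\sum_i \Delta_i$ of the corresponding path $q$, and that this holds uniformly in $\rho_{kl}$ so that the resulting Markov kernel is genuinely time-homogeneous. Once this coupling is in place, the uniqueness of the invariant state of $\Phi^P$ feeds directly into Theorem~\ref{th:original}, and the remainder of the argument reduces to the subsequence-plus-Slutsky truncation described above.
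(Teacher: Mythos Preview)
Your proposal is correct and follows essentially the same approach as the paper: verify that the reduced Kraus operators satisfy the completeness relation, apply Theorem~\ref{th:original} to the resulting homogeneous reduced walk, and infer the asymptotic behaviour of the original trajectory from that of the $l$-step subsampled one. In fact your argument is more careful than the paper's, which simply asserts that ``the one-to-one correspondence assures that the asymptotic behavior is the same'' without spelling out the bounded-increment/Slutsky step you provide for passing from the subsequence $n\in l\Z$ to all $n$.
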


\begin{proof}
We apply the Theorem~\ref{th:original} to the reduced OQW as in Def.
\ref{def:abstract-class}. As all the path's lengths are equal and describe all
possible paths starting from a vertex of type $A$ we have that $\sum_{q=1}^N
K_q^{R\dagger} K_q^R = \sum_{q=1}^{N} \left( K_1^q\ldots K_l^q \right)^\dagger
K_1^q \ldots K_l^q = \1_{\XX}$. Thus the new walk satisfies assumptions of the
Theorem~\ref{th:original}. One step of this walk corresponds exactly to $l$
steps of the original walk.  The one-to-one correspondence assures that the
asymptotic behavior is the same.
\end{proof}

\subsubsection{Example}\label{sec:example-reducible}
We show the application of Theorem~\ref{th:reducible} by considering a walk on
a network presented in the Fig. \ref{fig:2d-reducible-network}. The Kraus
operators for vertices of type $A$ are defined as follows:
\begin{equation}
\begin{split}
A_U(X) = & \alpha \ket{0}\bra{0}X\ket{0}\bra{0} + (1 - \alpha) \ket{1}\bra{0} 
X \ket{0}\bra{1}, \\
A_R(X) = & \frac12 \ket{1}\bra{1}X\ket{1}\bra{1} + \frac12 \ket{3}\bra{1} X 
\ket{1}\bra{3} \\
A_D(X) = & \alpha \ket{3}\bra{2}X\ket{3}\bra{2} + (1 - \alpha) \ket{2}\bra{2} 
X \ket{2}\bra{2}, \\
A_L(X) = & \frac12 \ket{3}\bra{3}X\ket{3}\bra{3} + \frac12 \ket{0}\bra{3} 
X \ket{3}\bra{0}.
\end{split}
\end{equation}
The operators for vertices of type $B$ are:
\begin{equation}
\begin{split}
B_U(X) = & \alpha \ket{1}\bra{0}X\ket{0}\bra{1} + (1 - \alpha) \ket{3}\bra{0} 
X \ket{0}\bra{3}, \\
B_R(X) = & \frac12 \ket{0}\bra{1}X\ket{1}\bra{0} + \frac12 \ket{2}\bra{1} X 
\ket{1}\bra{2}, \\
B_D(X) = & \alpha \ket{1}\bra{2}X\ket{2}\bra{1} + (1 - \alpha) \ket{3}\bra{2} 
X \ket{2}\bra{3}, \\
B_L(X) = & \frac12 \ket{0}\bra{3}X\ket{3}\bra{0} + \frac12 \ket{2}\bra{3} 
X \ket{3}\bra{2}.
\end{split}
\end{equation}
In our example we set $\alpha=0.81$. The behavior of this particular walk is 
presented in Fig.~\ref{fig:2d-reducible-walk}. As expected, after a 
sufficiently large 
number of steps, the distribution is Gaussian and moves towards the left and 
down.
\begin{figure}[!h]
\centering
\subfloat[\label{fig:reducible-a}]{\includegraphics[width=0.49\textwidth]{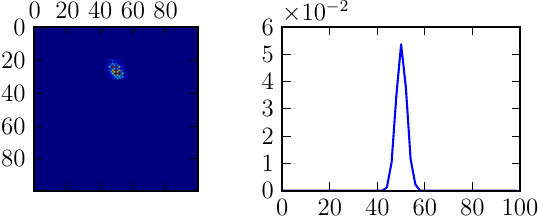}}
\subfloat[\label{fig:reducible-b}]{\includegraphics[width=0.49\textwidth]{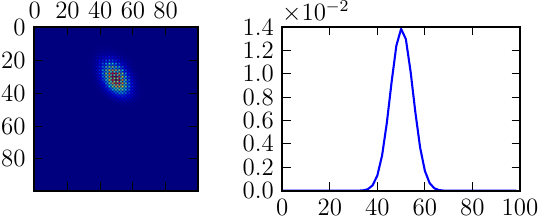}}\\
\subfloat[\label{fig:reducible-c}]{\includegraphics[width=0.49\textwidth]{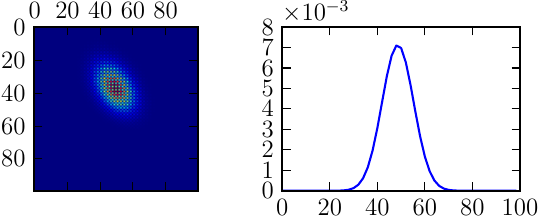}}
\subfloat[\label{fig:reducible-d}]{\includegraphics[width=0.49\textwidth]{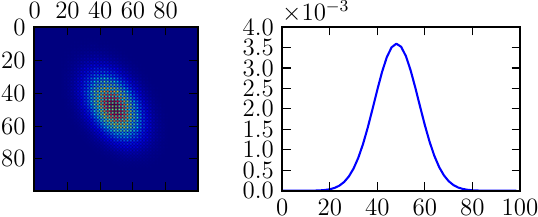}}
\caption{An example of a reducible OQW on a 2D lattice shown in
Fig.~\ref{fig:2d-reducible-network}. The plots show the distribution of the
walks for various time steps and a cross section through the center of the
distribution. Panel~\protect\subref{fig:reducible-a}  $n=10$,
panel~\protect\subref{fig:reducible-b}  $n=50$,
panel~\protect\subref{fig:reducible-c}  $n=100$,
panel~\protect\subref{fig:reducible-d}  $n=200$.}\label{fig:2d-reducible-walk}
\end{figure}

\subsection{Irreducible OQWs}\label{sec:irreducible}
The assumptions introduced in Theorem~\ref{th:reducible}
allow us to analyze some non-homogeneous OQW, but the class of such walks is 
still very limited. In this section we aim to provide a way to determine 
asymptotic behavior of less restricted family of OQWs.

\subsubsection{Theorem and proof}\label{sec:irreducible-theorem}
Let's consider an OQW on a network composed with several types of vertices on an
infinite lattice. The main assumption of the following theorem is that the 
distribution of vertex classes is regular over the lattice \ie~density of every 
vertex class $C\in\Gamma$ is transition invariant.

\begin{definition}\label{def:regular-network}
A regular network is a network where each vertex's class is assigned
randomly at each step with transition invariant probability distribution 
$\{p_C\}_C$.
\end{definition}

\begin{theorem}\label{th:irreducible}
Given an open quantum walk on $\mathbb{Z}^d$ with vertex classes 
$c(X)\in\Gamma$ for $X 
\in \mathbb{Z}^d$ and associated transition operators $\{K_1^{c(X)}, \ldots, 
K_{2d}^{c(X)}\}_{X\in \Z^d} \subset L(\XX)$ we construct for each class of 
vertices $C\in\Gamma$ a quantum channel $\Phi^C$ as in Eq.~\eqref{eq:phic} with 
a unique 
invariant state $\rho_\infty^C \in \Omega(\XX)$ and an average position vector 
$\ket{m} = \sum_{C\in\Gamma} p_C \ket{m_C}$, where $\ket{m_C}$ is 
obtained from Eq.~\eqref{eq:mc} and $p_C$ from Def. \ref{def:regular-network}. 
Let $(\rho_n, X_n)_{n\ge 0}$ be the quantum 
trajectory process associated with this open quantum walk, then
\begin{equation}
\lim_{n\rightarrow\infty} \frac{\mathbb{E}(\ket{X_n})}{n}= \ket{m},
\end{equation}
and probability distribution of normalized random variable $X_n$
\begin{equation}
\frac{\ket{X_n}- n \ket{m}}{\sqrt{n}},
\end{equation}
converges in law to the Gaussian distribution.
\end{theorem}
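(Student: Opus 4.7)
The strategy is to reduce the non-homogeneous walk on a regular network to an effectively homogeneous one to which Theorem~\ref{th:original} can be applied. The key observation is that the regularity assumption in Definition~\ref{def:regular-network} guarantees that, when the walker traverses any sufficiently large region of $\Z^d$, the frequency with which it encounters vertices of class $C$ equals $p_C$ up to a controlled error, so the per-step dynamics can be averaged over the class distribution.

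First, I would introduce the averaged quantum channel $\bar\Phi : L(\XX)\to L(\XX)$ defined by $\bar\Phi(\rho)=\sum_{C\in\Gamma} p_C\,\Phi^C(\rho)$, with Kraus representation $\{\sqrt{p_C}\,K_j^C\}_{C,j}$. Using the law of large numbers for class occurrences provided by the regular network assumption, one shows that the conditional expectation of the position increment at step $n$, given the history, tends to the expectation computed under $\bar\Phi$ as $n$ grows, up to an error that averages out. Combining this with the ergodic theorem for the internal state chain driven by each $\Phi^C$ (using the uniqueness of $\rho_\infty^C$), the empirical mean displacement per step converges to $\sum_{C\in\Gamma}p_C\ket{m_C}=\ket{m}$, yielding $\lim_n \mathbb{E}(\ket{X_n})/n = \ket{m}$.

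Second, for the Gaussian convergence, I would adapt the martingale central limit theorem argument of Attal \etal~\cite{attal2012central}. One writes $\ket{X_n}-n\ket{m}=M_n+R_n$, where $M_n$ is a martingale obtained by subtracting the conditional expectations of the increments, and $R_n$ is a coboundary term that is $O(1)$ in $L^2$. Applying the martingale CLT to $M_n/\sqrt{n}$ gives convergence in law to a Gaussian distribution on $\R^d$ whose covariance is obtained by time-averaging the class-conditional increment covariances weighted by $p_C$.

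The main obstacle is the coupling between the internal state process and the sequence of classes visited by the walker. The class sequence $(c(X_k))_{k\ge 0}$ is not exogenous: it is determined by the walker's trajectory, which itself depends on the internal state through the Kraus operators $K_j^{c(X_k)}$. The regularity hypothesis must be invoked precisely at this step to show that time-averaged class frequencies along a typical trajectory coincide with the spatial frequencies $p_C$, and that the joint chain on $\Omega(\XX)\times\Gamma$ formed by the internal state and the current vertex class admits a unique stationary measure with class marginal $(p_C)_{C\in\Gamma}$. Once this joint ergodicity is established, the rest of the argument proceeds in parallel with the homogeneous case of Theorem~\ref{th:original}.
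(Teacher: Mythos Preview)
Your proposal mixes two strategies. The first --- constructing an averaged channel $\bar\Phi=\sum_{C}p_C\Phi^C$ and reducing to Theorem~\ref{th:original} --- is \emph{not} the route taken in the paper, and it has a gap. The mean displacement of the homogeneous walk governed by $\bar\Phi$ would be $\sum_{C,j}p_C\,\Tr\bigl(K_j^C\bar\rho_\infty (K_j^C)^\dagger\bigr)\ket{j}$, where $\bar\rho_\infty$ is the invariant state of $\bar\Phi$; this is in general different from $\ket{m}=\sum_C p_C\ket{m_C}$, because each $\ket{m_C}$ in Eq.~\eqref{eq:mc} is computed with the \emph{class-specific} invariant state $\rho_\infty^C$, not with $\bar\rho_\infty$. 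Your phrase ``combining this with the ergodic theorem for the internal state chain driven by each $\Phi^C$'' tacitly assumes the internal state spends enough consecutive time at a fixed class to equilibrate to $\rho_\infty^C$ before switching, which the regularity hypothesis does not provide. So the reduction to the homogeneous theorem does not close.

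The paper instead runs the martingale CLT argument directly, but with a construction your sketch omits. For each class $C$ and each $l\in\R^d$ it solves a Poisson equation
\[
\bigl(\Id_{L(\XX)}-(\Phi^C)^\dagger\bigr)(L_l^C)=\sum_{j=1}^{2d}(K_j^C)^\dagger K_j^C\,\braket{j}{l}-\braket{m_C}{l}\,\Id_\XX
\]
(Lemma~\ref{lem:Lexists}), then sets $f_C(\rho,x)=\Tr(\rho L_l^C)+\braket{x}{l}$ so that $(\mathrm{Id}-P_C)f_C(\rho,x)=\braket{x}{l}-\braket{m_C}{l}$ (Lemma~\ref{lem:f}). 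The decomposition $\braket{X_n}{l}-n\braket{m}{l}=M_n+R_n$ is obtained by inserting $1=\sum_C p_C$ in front of each increment, applying Lemma~\ref{lem:f} class by class, and rearranging; the coboundary $R_n$ then involves only $\Tr(\rho L_l^C)$ and a single displacement, hence is uniformly bounded. This is stronger than the $L^2$-bounded remainder you get by merely subtracting conditional expectations, and it is what makes the Lindeberg condition in Eq.~\eqref{eq:matingale-zero} immediate. The quadratic variation~\eqref{eq:matingale-sigma} is then handled by expanding $(\Delta M_k)^2=\sum_{C,C'}p_Cp_{C'}\Delta M_k^C\Delta M_k^{C'}$ into three pieces and applying the ergodic theorem to the third.

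You correctly flag the coupling between the internal state and the visited class sequence as the delicate point; in the paper this is absorbed at the very last step, where the ergodic theorem together with the regularity assumption is invoked to replace the empirical average $\frac{1}{n}\sum_k\Tr(\rho_{k-1}\Xi^{c(k-1)})$ by $\Tr(\rho_\infty\Xi)$ with $\Xi=\sum_c p_c\Xi^c$.
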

Before we prove Theorem~\ref{th:irreducible}, let us introduce three technical 
lemmas.

\begin{lemma}\label{lem:ker-im-sum}
For every superoperator $\Phi:L(\XX)\rightarrow L(\XX)$ the space 
$L(\XX) = \mathrm{Ker}(\Phi) \oplus \mathrm{Im}(\Phi^\dagger)$.
\end{lemma}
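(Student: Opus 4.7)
The plan is to endow $L(\XX)$ with the Hilbert--Schmidt inner product $\langle A,B\rangle = \Tr(A^\dagger B)$, turning it into a finite-dimensional Hilbert space, and then apply the standard orthogonal-complement decomposition $V = \ker(T)\oplus\mathrm{Im}(T^{*})$, valid for any linear operator $T$ on a finite-dimensional Hilbert space $V$ with Hermitian adjoint $T^{*}$. Recall that in that setting the identity $\langle Tv,w\rangle = \langle v,T^{*}w\rangle$ immediately gives $\ker(T) = \mathrm{Im}(T^{*})^{\perp}$, after which $V = \mathrm{Im}(T^{*}) \oplus \mathrm{Im}(T^{*})^{\perp}$ is the ordinary orthogonal decomposition theorem.

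The core reduction is to match the conjugate superoperator $\Phi^\dagger$ defined in the paper with the HS Hermitian adjoint $\Phi^{*}$. Starting from the defining relation $\Tr(\Phi(A)B) = \Tr(A\,\Phi^\dagger(B))$ together with $\langle X,Y\rangle = \Tr(X^\dagger Y)$, a short calculation using $\overline{\Tr(Z)} = \Tr(Z^\dagger)$ and cyclicity of the trace yields $\Phi^{*}(B) = \Phi^\dagger(B^\dagger)^\dagger$. In the setting of open quantum walks $\Phi$ is built from Kraus operators and hence preserves Hermiticity, $\Phi(A^\dagger) = \Phi(A)^\dagger$; the same then holds for $\Phi^\dagger$ (by a parallel computation using the defining identity), so the relation collapses to $\Phi^{*} = \Phi^\dagger$. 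Consequently $\mathrm{Im}(\Phi^{*}) = \ImConj$, and the Hilbert-space decomposition becomes the desired $L(\XX) = \KerPhi \oplus \ImConj$.

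The subtle point, and the main obstacle, is precisely the distinction between the sesquilinear HS adjoint and the bilinear-pairing adjoint $\Phi^\dagger$. Running the argument purely with the non-degenerate but merely symmetric bilinear form $(A,B)\mapsto\Tr(AB)$ yields only the dimension identity $\dim\KerPhi + \dim\ImConj = \dim L(\XX)$ via $\KerPhi = \ImConj^{\perp}$ in that pairing; it does not rule out a nontrivial intersection, because that pairing is isotropic (many nonzero $A$ satisfy $\Tr(A^{2})=0$, e.g.\ strictly upper-triangular matrices). It is the positive-definiteness of the HS inner product that forces $\KerPhi\cap\mathrm{Im}(\Phi^{*}) = \{0\}$, and Hermiticity preservation is what allows one to swap $\Phi^{*}$ for $\Phi^\dagger$ in the final statement, so I would foreground both of these properties in the write-up.
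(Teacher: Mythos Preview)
Your argument is correct and, in fact, more careful than the paper's. The paper works directly with the trace bilinear form $(A,B)\mapsto\Tr(AB)$ that defines $\Phi^\dagger$: it shows in two short implications that $A\perp\ImConj$ iff $A\in\KerPhi$, i.e.\ $\KerPhi=(\ImConj)^\perp$ with respect to that form, and then simply stops, implicitly reading this as the claimed direct-sum decomposition. You instead pass to the positive-definite Hilbert--Schmidt inner product, invoke the standard $\ker(T)\oplus\mathrm{Im}(T^{*})$ decomposition there, and then identify the Hermitian adjoint $\Phi^{*}$ with $\Phi^\dagger$ via Hermiticity preservation. The payoff of your detour is precisely the point you flag in your final paragraph: the bilinear-form argument alone yields only the dimension count and the equality $\ImConj=\KerPhi^\perp$, not the trivial intersection---and for a generic superoperator the lemma as stated is actually false (take $\Phi$ on $L(\mathbb{C}^2)$ that annihilates $E_{12}$ and fixes $E_{11},E_{21},E_{22}$; then one checks $\Phi^\dagger(E_{12})=E_{12}$, so $E_{12}\in\KerPhi\cap\ImConj$). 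Your Hermiticity-preservation hypothesis, which does hold for the completely positive maps $\Phi^C$ (and hence for $\Id_{L(\XX)}-\Phi^C$) actually used downstream in Lemma~\ref{lem:Lexists}, is exactly what is needed to close this gap; the paper's own proof is silent on it.
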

\begin{proof}
First we show that if $A\perp \ImConj$ then $A\in\KerPhi$ for $A\in L(\XX)$.
Let us assume $A\perp \ImConj$. Then for every $B\in L(\XX)$ it holds that 
$\tr(A\Phi^\dagger (B))=0$. Then $\tr(\Phi(A)B)=0$. Thus $\Phi(A)=0$ and 
$A\in\KerPhi$.

Now we show that if $A\in\KerPhi$ then $A\perp\ImConj$. We assume $\Phi(A)=0$.
Then for any chosen $B\in L(\XX)$ it holds that $\tr(\Phi(A)B)=0$, thus 
$\tr(A\Phi^\dagger(B)=0)$, hence $A\perp\ImConj$.
\end{proof}

\begin{lemma}\label{lem:Lexists}
Given a channel $\Phi^C$ corresponding to vertex class $C$ with associated
Kraus operators $\{K_1^C, \ldots, K_{2d}^C\} \subset L(\XX)$ which has a unique
invariant state $\rho_\infty$, for every $\ket{l}\in\mathbb{R}^d$ there exists 
$L_l^C
\in L(\XX)$ such that
\begin{equation}
\left(\Id_{L(\XX)} -\left( \Phi^C \right)^\dagger \right)\left(L_l^C\right) =
\sum_{j=1}^{2d} \left(\left(K_j^C \right)^\dagger K_j^C \braket{j}{l} \right) -
\braket{m_C}{l} \Id_{\XX}.
\end{equation}
\end{lemma}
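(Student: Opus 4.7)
The plan is to solve the equation $(\Id - (\Phi^C)^\dagger)(L_l^C) = Y$ where $Y$ denotes the right-hand side, by showing that $Y$ lies in the image of $\Id - (\Phi^C)^\dagger$. To this end, I would apply Lemma~\ref{lem:ker-im-sum} to the superoperator $\Psi := \Id_{L(\XX)} - (\Phi^C)^\dagger$, whose adjoint is $\Psi^\dagger = \Id_{L(\XX)} - \Phi^C$. The lemma yields $L(\XX) = \mathrm{Ker}(\Psi^\dagger) \oplus \mathrm{Im}(\Psi)$, so membership $Y \in \mathrm{Im}(\Psi)$ is equivalent to $Y$ being orthogonal (in the Hilbert-Schmidt sense) to $\mathrm{Ker}(\Id - \Phi^C)$.

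Next I would identify this kernel. A fixed point of $\Phi^C$ in $L(\XX)$ gives, after the standard decomposition into Hermitian and anti-Hermitian parts and further into positive and negative parts, fixed points of $\Phi^C$ that are positive; normalising by trace produces invariant states. The uniqueness assumption on $\rho_\infty^C$ therefore forces $\mathrm{Ker}(\Id - \Phi^C) = \mathrm{span}\{\rho_\infty^C\}$. Hence it suffices to prove the single scalar identity $\tr(\rho_\infty^C Y) = 0$.

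The final step is a direct computation. Using cyclicity of the trace,
\begin{equation}
\tr(\rho_\infty^C Y) = \sum_{j=1}^{2d} \tr\!\left(K_j^C \rho_\infty^C (K_j^C)^\dagger\right) \braket{j}{l} - \braket{m_C}{l}\,\tr(\rho_\infty^C),
\end{equation}
and the definition \eqref{eq:mc} of $\ket{m_C}$ together with $\tr(\rho_\infty^C)=1$ makes this vanish. Therefore $Y$ lies in the image of $\Id - (\Phi^C)^\dagger$, and any pre-image may be taken as $L_l^C$, proving existence.

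The main conceptual obstacle is the identification of $\mathrm{Ker}(\Id - \Phi^C)$ with the one-dimensional span of $\rho_\infty^C$: passing from uniqueness of the invariant \emph{state} to one-dimensionality of the fixed-point subspace of the linear map $\Phi^C$ on all of $L(\XX)$ is the only non-routine point. Everything else is an application of Lemma~\ref{lem:ker-im-sum} together with a one-line use of the defining formula for $\ket{m_C}$.
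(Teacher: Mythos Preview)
Your proposal is correct and follows essentially the same route as the paper: show $\Tr(\rho_\infty^C Y)=0$ from the definition of $\ket{m_C}$, identify $\mathrm{Ker}(\Id-\Phi^C)$ with $\mathrm{span}\{\rho_\infty^C\}$, and invoke Lemma~\ref{lem:ker-im-sum} to conclude $Y\in\mathrm{Im}(\Id-(\Phi^C)^\dagger)$. You are in fact slightly more careful than the paper in flagging the passage from uniqueness of the invariant \emph{state} to one-dimensionality of the fixed-point subspace; note only that to get $\mathrm{Ker}(\Psi^\dagger)\oplus\mathrm{Im}(\Psi)$ you should apply Lemma~\ref{lem:ker-im-sum} with $\Psi^\dagger$ in the role of $\Phi$, not $\Psi$.
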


\begin{proof}
First we compute $\braket{m_C}{l}$. We get
\begin{equation}
\braket{m_C}{l} = \sum_{i=1}^{2d} \Tr\left(K_i^C\rho_\infty^C \left( K_i^C 
\right)^\dagger\braket{i}{l}\right).
\end{equation}
Next, we move all the terms to one side of the equation and write all terms 
under the trace
\begin{equation}
\sum_{i=1}^{2d} \Tr\left(K_i^C\rho_{\infty} 
\left(K_i^C\right)^\dagger\braket{i}{l} - \frac{1}{2d}
\braket{m_C}{l} \rho_\infty^C\Id_{\XX} \right)=0,
\end{equation}
where we multiplied $\braket{m_C}{l}$ by $\rho_\infty^C \1_{\XX}$.
Finally, we use the fact that trace is cyclic and linear and get:
\begin{equation}\label{eq:lemmaL}
\Tr \rho_\infty^C \left(\sum_{i=1}^{2d} \left(K_i^C\right)^\dagger 
K_i^C\braket{i}{l} 
-\frac{1}{2d} \braket{m_C}{l}\Id_{\XX} \right)=0.
\end{equation}
Thus we obtain that the term under the bracket in Eq. (\ref{eq:lemmaL}) is 
orthogonal to $\rho_\infty^C$ and as it is the only invariant state of $\Phi^C$ 
we get that $\Ker(\1_{L(\XX)}-\Phi^C)=\rho_\infty^C$. Then, from Lemma 
\ref{lem:ker-im-sum}, the states orthogonal to the kernel are in the image of 
the conjugated superoperator, hence we get:
\begin{equation}
\sum_{i=1}^{2d} \left(K_i^C\right)^\dagger K_i^C \braket{i}{l} - \frac{1}{2d}
\braket{m_C}{l} \Id_\XX \in 
\Ker\left(\1_{L(\XX)} - 
\Phi^C\right)^\perp=\Image\left(\Id_{L(\XX)}-\left(\Phi^C\right)^\dagger\right).
\end{equation}
Hence, we have shown that $L_l^C$ exists.
\end{proof}

\begin{lemma}\label{lem:f}
For each class $C$ and a vector $l \in \mathbb{R}^d$ a function
\begin{equation}
f_C: \Omega(\XX) \times \mathbb{R}^d \rightarrow \mathbb{R},
\end{equation}
given by the explicit formula
\begin{equation}
f_C(\rho, i)= \Tr(\rho L_l^C) + \braket{i}{l}\label{eq:fc},
\end{equation}
satisfies
\begin{equation}
(1-P_C)f_C(\rho, i)=\braket{i}{l} - \braket{m_C}{l},
\end{equation}
where $P_C$ is given by Eq.~\eqref{eq:pc}.
\end{lemma}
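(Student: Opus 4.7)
The plan is to expand $(\1-P_C)f_C(\rho,x)$ directly from the definition of $P_C$ in Eq.~\eqref{eq:pc} and then apply Lemma~\ref{lem:Lexists} once to eliminate $(\Phi^C)^\dagger(L_l^C)$. Reading Eq.~\eqref{eq:pc} carefully, the transition $(\rho,\Delta_i)\mapsto(\rho',\Delta_j)$ over-writes the second component with the direction $\Delta_j$ of the step actually taken, and the transition probability $\Tr(K_j^C\rho(K_j^C)^\dagger)$ does not depend on $\Delta_i$. Interpreting $x$ as the current direction label, I would therefore write
\begin{equation*}
(P_C f_C)(\rho,x) = \sum_{j=1}^{2d} \Tr\!\bigl(K_j^C \rho (K_j^C)^\dagger\bigr)\, f_C\!\left(\tfrac{K_j^C \rho (K_j^C)^\dagger}{\Tr(K_j^C \rho (K_j^C)^\dagger)},\, j\right),
\end{equation*}
and observe that the right-hand side is independent of $x$. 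The transition probabilities cancel against the denominators inside the first argument of $f_C$, and using $\sum_j (K_j^C)^\dagger X K_j^C = (\Phi^C)^\dagger(X)$ the sum consolidates into $\Tr\!\bigl(\rho\,(\Phi^C)^\dagger(L_l^C)\bigr) + \sum_j \Tr\!\bigl(K_j^C \rho (K_j^C)^\dagger\bigr)\braket{j}{l}$.

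Next I would invoke Lemma~\ref{lem:Lexists} to rewrite $(\Phi^C)^\dagger(L_l^C) = L_l^C - \sum_j (K_j^C)^\dagger K_j^C\braket{j}{l} + \braket{m_C}{l}\Id_\XX$ and substitute this into the first trace. Since $\Tr\rho=1$, the identity-in-$\XX$ piece contributes exactly $\braket{m_C}{l}$, and the quadratic-in-$K$ piece contributes $-\sum_j\Tr(K_j^C\rho(K_j^C)^\dagger)\braket{j}{l}$. Assembling $(\1-P_C)f_C(\rho,x) = \bigl(\Tr(\rho L_l^C)+\braket{x}{l}\bigr) - (P_C f_C)(\rho,x)$, the two $\Tr(\rho L_l^C)$ terms cancel and the two $\sum_j\Tr(K_j^C\rho(K_j^C)^\dagger)\braket{j}{l}$ contributions cancel, leaving precisely $\braket{x}{l}-\braket{m_C}{l}$.

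I do not anticipate any real obstacle here: the calculation is the standard Poisson-equation manipulation and $L_l^C$ was engineered in Lemma~\ref{lem:Lexists} precisely so that these cancellations occur. The only subtle point worth flagging explicitly is the convention by which $P_C$ acts on the $\mathbb{R}^d$-valued second argument of $f_C$ — it over-writes rather than accumulates the direction label — which is exactly why the linear-in-$x$ term survives on the right-hand side rather than collapsing into $-\braket{m_C}{l}$.
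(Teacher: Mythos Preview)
Your proposal is correct and follows essentially the same route as the paper: expand $P_Cf_C$ from the definition of $P_C$ in Eq.~\eqref{eq:pc}, recognize the first piece as $\Tr\!\bigl(\rho\,(\Phi^C)^\dagger(L_l^C)\bigr)$, and then invoke Lemma~\ref{lem:Lexists} to produce the cancellations that leave $\braket{x}{l}-\braket{m_C}{l}$. Your remark about the over-writing convention for the second argument is a helpful clarification that the paper leaves implicit.
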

\begin{proof}
We apply the $P_C$ operator as defined in Eq.~\ref{eq:pc}.
Let us note that 
\begin{equation}
(P_C f_C)(\rho, i) = \sum_{\rho', j}P_C[(\rho, i), 
(\rho', j)]f_C(\rho', j).\label{eq:action-of-P}
\end{equation}
Applying the definition of $P_C$ to \eqref{eq:fc} we get:
\begin{equation}
\begin{split}
(1 - P_C)f_C(\rho, i) & = \Tr(\rho L_l^C) + \braket{i}{l} - 
\left[\Tr\left(\sum_{j=1}^{2d} K_j^C\rho \left(K_j^C\right)^\dagger L_l^C 
\right) \right. \\
& + \left. \sum_{j=1}^{2d} \Tr \left(K_j^C\rho 
\left(K_j^C\right)^\dagger\right)\braket{j}{l}\right].
\end{split}
\end{equation}
Now, using Lemma~\ref{lem:Lexists} we get
\begin{equation}
\Tr \rho \left[
\left(\Id_{L(\XX)}-\left(\Phi^C\right)^\dagger\right)\left(L_l^C\right)-
\sum_{j=1}^{2d} \left(K_j^C\right)^\dagger K_j^C \braket{j}{l} \right] 
+\braket{i}{l}
=\braket{i}{l} - \braket{m_C}{l}.
\end{equation}
which completes the proof.
\end{proof}

\begin{proof}[Proof of Theorem~\ref{th:irreducible}]
For a random variable $X_n$ we expand the formula 
$F_l=\braket{X_n}{l}-n\braket{m}{l}$:
\begin{equation}
F_l =\braket{X_n}{l} - 
n\braket{m}{l}=\braket{X_0}{l}+\sum_{k=1}^n(\bra{X_k}-\bra{X_{k-1}})-\bra{m})\ket{l}.
\end{equation}
Recall that $\sum_{C \in \Gamma} p_C=1$, $m=\sum_{C \in \Gamma} p_C m_C$ and we 
denote $\bra{X_k} - \bra{X_{k-1}} = \bra{\Delta X_k}$, we get:
\begin{equation}
F_l =\braket{X_0}{l}+\sum_{k=1}^n\sum_{C \in \Gamma} p_C(\bra{\Delta X_k}- 
\bra{m_C})\ket{l}.
\end{equation}
From Lemma~\ref{lem:f} we get 
$(\ket{X}-\ket{m_C})\ket{l}=(1-P_C)f_C(\rho, \ket{X})$ for some $\rho 
\in \Omega(\XX)$, hence:
\begin{equation}
\begin{split}
F_l & =\braket{X_0}{l} + \sum_{k=1}^n\sum_{C \in \Gamma} p_C (1-P_C) 
f_C (\rho_k, \ket{\Delta X_k})= \\
& = \braket{X_0}{l} + \sum_{k=1}^n\sum_{C \in \Gamma} p_C (f_C(\rho_k, 
\ket{\Delta X_k}) - P_Cf_C(\rho_k, \ket{\Delta X_k})).
\end{split}
\end{equation}
After rearranging the sum in the formula for $F_l$ we get:
\begin{equation}
\begin{split}
F_l = &\braket{X_0}{l} + \sum_{k=2}^n\sum_{C \in \Gamma} [p_C(f_C(\rho_k, 
\ket{\Delta X_k})-P_C f_C(\rho_{k-1}, \ket{\Delta X_{k-1}}))] + \\ 
&+ \sum_{C \in \Gamma} p_C f_C(\rho_1, \ket{\Delta X_1})-\sum_{C \in \Gamma} 
p_C P_Cf_C(\rho_n, \ket{\Delta X_n})=M_n + R_n.
\end{split}
\end{equation}
Now we consider $M_n$ and $R_n$ separately. First we discuss $M_n$:
\begin{equation}
M_n = \sum_{C \in \Gamma} \sum_{k=2}^{n} p_C(f_C(\rho_k, \ket{\Delta X_k}) - 
P_Cf_C(\rho_{k-1}, \ket{\Delta X_{k-1}})).
\end{equation}
We notice that $M_n$ is a centered martingale i.e.
\begin{equation}
\mathbb{E}[\Delta M_n|\F_{n-1}] = 0,
\end{equation}
where $\Delta M_n = M_n - M_{n-1}$ and $\F$ denotes filtering for stochastic 
process $M_n$~\cite{brown1971, hall1980martingale}. This follows from the 
action of $P_C$ stated in eq. (\ref{eq:action-of-P}). As $P_C$ is a transition 
operator for the corresponding Markov chain, the value of $P_Cf_C$ for 
step $k-1$ is exactly the expectation value of $f_C$ at the next step
\begin{equation}
\E[ f_C(\rho_k, \ket{\Delta X_k})|\F_{k-1}] = P_Cf_C(\rho_{k-1}, \ket{\Delta 
X_{k-1}}).
\end{equation}
Additionally $|\Delta 
M_n|$ is bounded  from above i.e. 
$|\Delta M_n| < M_{\max}$ as $\Delta M_n$ includes terms corresponding to one 
step of the walk.

In the case of $R_n$ we have:
\begin{equation}
R_n=\braket{X_0}{l} + \sum_{C \in \Gamma} p_C f_C(\rho_1, \ket{\Delta X_1}) - 
\sum_{C \in \Gamma} p_C P_C f_C(\rho_n, \ket{\Delta X_n}).
\end{equation}
From the definition of $f_C$ we notice that $R_n$ is bounded as the first two 
terms are constant and the last one $P_Cf_C(\rho, \ket{\Delta X_n})=\Tr(\rho 
L_l^C)+\braket{\Delta X_n}{l}$ is clearly bounded, hence$|R_n| < R_{\max}$ and 
$R_n$ does not influence the asymptotic behavior.

Now it suffices to show that the following two equalities hold (for proof see 
Theorem 3.2 and Corollary 3.1 in~\cite{hall1980martingale}):
\begin{equation}
\lim_{n\rightarrow\infty} \frac{1}{n}\sum_{k=1}^n \E[(\Delta 
M_k)^2{1}_{|\Delta M_k|\ge \epsilon \sqrt{n}}| \F_{k-1}] = 0
\label{eq:matingale-zero},
\end{equation}
and 
\begin{equation}
\lim_{n\rightarrow\infty} \frac{1}{n} \sum_{k=1}^n \E[(\Delta M_k)^2| \F_{k-1}] 
= \sigma^2
\label{eq:matingale-sigma},
\end{equation}
to obtain that $M_n/\sqrt{n}$ converges in distribution to $\N(0, \sigma^2)$, 
where
\begin{equation}
1_{|\Delta M_k|\ge \epsilon \sqrt{n}}=\left\{
\begin{matrix}
1, {|\Delta M_k| \ge \epsilon \sqrt{n}},\\
0, {|\Delta M_k| <   \epsilon \sqrt{n}},
\end{matrix}
\right.
\end{equation}
introduces restricted expectation values.

We prove Eq.~\eqref{eq:matingale-zero} using the fact that $|\Delta M_k|$ is 
bounded, hence the sum in Eq.~\eqref{eq:matingale-zero} terminates for $n>N$, 
for some $N 
\in \mathbb{N}$.

In order to prove the equality in Eq.~\eqref{eq:matingale-sigma}  we expand 
$(\Delta M_k)^2$:
\begin{equation}
\begin{split}
(\Delta M_k)^2 & = \left( \sum_{C \in \Gamma} 
p_C (\Tr\rho_kL_l^C - \Tr \rho_{k-1}L_
(\bra{\Delta X_k}-\bra{m_C})\ket{l} ) \right)^2= \\
& \left(\sum_{C \in \Gamma} p_C \Delta M_k^C \right)^2= \sum_{C,C'\in \Gamma} 
p_C p_{C'} \Delta M_k^C \Delta M_k^{C'},
\end{split}
\end{equation}
where $\Delta M_k^C=( \Tr(\rho_kL_l^C)-\Tr(\rho_{k-1}L_l^C)+(\bra{\Delta 
X_k}-\bra{m_C})\ket{l})$. Next, we expand the product
\begin{equation}
\begin{split}
\Delta M_k^C \Delta M_k^{C'}&= 
[\Tr(\rho_kL_l^C)-\Tr(\rho_{k-1}L_l^C)+(\bra{\Delta 
X_k}-\bra{m_C})\ket{l}] \times \\
&\times [\Tr(\rho_kL_l^{C'})-\Tr(\rho_{k-1}L_l^{C'})+(\bra{\Delta 
X_k}-\bra{m_C'})\ket{l}].
\end{split}
\end{equation}
We divide this expression into three terms $\Delta M_k^{C'} \Delta M_k^C = 
T_{C,C'}^{(1,k)} + T_{C,C'}^{(2,k)} + T_{C,C'}^{(3,k)}$. Henceforth, we will 
drop indexes $C,C', k$ when unambiguous. The term $T^{(1)}$ is equal to:
\begin{equation}
T^{(1)}=\Tr(\rho_kL_l^{C'})\Tr(\rho_kL_l^{C})-\Tr(\rho_{k-1}L_l^{C'})\Tr(\rho_{k-1}L_l^C).
\end{equation}
We compute $\E(T^{(1)}|\F_{k-1})$ by adding the term $\pm \Tr(\rho_kL_l^{C'}) 
\Tr(\rho_kL_l^C)$
\begin{equation}
\begin{split}
\E[T^{(1)}|\F_{k-1}] & =\E(\Tr(\rho_{k}L_l^{C'})\Tr(\rho_{k}L_l^C)|\F_{k-1}) 
-\Tr(\rho_{k}L_l^{C'})\Tr(\rho_{k}L_l^C)+ \\
& +\Tr(\rho_{k} L_l^{C'}) \Tr(\rho_{k} L_l^C)- \Tr(\rho_{k-1} L_l^{C'}) 
\Tr(\rho_{k-1} L_l^C),
\end{split}
\end{equation}
we obtain a sum of two terms that can be interpreted as an increment part of a 
martingale and an increment part of a sum respectively. Thus after a summation 
over $k$ both terms 
are bounded and we get the equality
\begin{equation}
\lim_{n\rightarrow\infty} \frac{1}{n} \sum_{k=1}^n \sum_{C,C' \in \Gamma} p_C 
p_{C'} 
\E[T^{(1,k)}_{C,C'} | \F_{k-1}]=0.
\end{equation}
The term $T^{(2)}$ is given by:
\begin{equation}
T^{(2)} = -\Tr(\rho_{k-1} L_l^{C'})\Delta M_k^C -\Tr(\rho_{k-1} L_l^C)\Delta 
M_k^{C'}.
\end{equation}
We note that $\E(\Delta M_k|\F_{k-1})=0$. Thus after summation over $C$ and 
${C'}$ we get the the expectation value of the whole term $T^{(2,k)}_{C,C'}$:
\begin{equation}
\lim_{n\rightarrow\infty} \frac{1}{n} \sum_{k=1}^n \sum_{C,C' \in \Gamma } p_C 
p_{C'} 
\E[T^{(2,k)}_{C,C'}|\F_{k-1}] = 0.
\end{equation}

We will calculate the term $T^{(3)}$ using the definition of the expectation 
value. We write the probability of $\ket{\Delta X}$ being equal to $\ket{j}$ 
and $\rho_k$ being $K_j \rho_{k-1} K_j^\dagger / \Tr(A_j \rho_{k-1} 
A_j^\dagger)$ as $\Tr(K_j \rho_{k-1} K_j^\dagger)$. This can be expressed in a 
nice trace form:
\begin{equation}
\begin{split}
\E[ T^{(3)}_{C, {C'}}|\F_{k-1}] & =  \E[(\bra{\Delta 
X_k}-\bra{m_C})\ket{l}(\bra{\Delta 
X_k}-\bra{m_{C'}})\ket{l} +
\\&
+ \Tr(\rho_kL_l^{C'})(\braket{\Delta X_k}{l}-\braket{m_C}{l})+
\\&
+ \Tr(\rho_kL_l^C)(\braket{\Delta X_k}{l}-\braket{m_{C'}}{l})|\F_{k-1}]=
\\&
= \sum_{i=1}^{2d} \Tr(K_i^{c(k-1)}\rho_{k-1} {K_i^{c(k-1)\dagger}}) \times \\&
\times [(\bra{i}-\bra{m_C}) \ket{l} (\bra{i} - \bra{m_{C'}}) \ket{l}+
\\&
+ \Tr(K_i^{c(k-1)}\rho_{k-1} K_i^{c(k-1)\dagger} L_l^{C'}) (\braket{i}{l} - 
\braket{m_C}{l})\times
\\&
\times \Tr(K_i^{c(k-1)}\rho_{k-1} K_i^{c(k-1)\dagger} L_l^{C}) (\braket{i}{l} - 
\braket{m_{C'}}{l})],
\end{split}
\end{equation}
where $c(k-1)$ is the class of $X_{k-1}$. Thus we can define 
$\Xi_{C,C'}^{c(k-1)}$ so that
\begin{equation}
\E[ T^{(3)}_{C,C'}|\F_{k-1}] =\Tr(\rho_{k-1}\Xi_{C,C'}^{c(k-1)}).
\end{equation}
After summation over $C$ and $C'$ the value is equal to:
\begin{equation}
\lim_{n\rightarrow\infty} \frac{1}{n} 
\sum_{k=1}^n \sum_{C,C' \in \Gamma}p_C p_{C'} \E[T^{(3,k)}_{C,C'}|\F_{k-1}]
=\lim_{n\rightarrow\infty} \frac{1}{n} \sum_{k=1}^n \Tr(\rho_{k-1}\Xi^{c(k-1)}),
\end{equation}
where $\Xi^{c(k-1)}=\sum_{C,C'\in \Gamma}p_C p_{C'} \Xi_{C,C'}^{c(k-1)}$. By 
the 
ergodic theorem (Th. 4.2 in \cite{attal2012central}) this 
converges to:
\begin{equation}
\lim_{n\rightarrow\infty} \frac{1}{n} \sum_{k=1}^n \Tr(\rho_{k-1}\Xi^{c(k-1)})  
= \Tr(\rho_{\infty} \Xi) = \sigma^2_l,
\end{equation}
with $\Xi=\sum_{c} p_c \Xi^{c}$.

Finally, after summing of all of the terms we get:
\begin{equation}
\lim_{n\rightarrow\infty} \frac{1}{n} \sum_k \E[(\Delta M_k)^2|\F_{k-1}] = 
\sigma_l^2,
\end{equation}
which completes the proof.
\end{proof}
\subsubsection{Example}\label{sec:example-irreducible}
As an example of a walk consistent with description in 
Section~\ref{sec:irreducible-theorem} we consider a walk with the same vertex 
types as in the reducible case, that is:
\begin{equation}
\begin{split}
A_U(X) = & \alpha \ket{0}\bra{0}X\ket{0}\bra{0} + (1 - \alpha) \ket{1}\bra{0} 
X \ket{0}\bra{1}, \\
A_R(X) = & \frac12 \ket{1}\bra{1}X\ket{1}\bra{1} + \frac12 \ket{3}\bra{1} X 
\ket{1}\bra{3} \\
A_D(X) = & \alpha \ket{3}\bra{2}X\ket{3}\bra{2} + (1 - \alpha) \ket{2}\bra{2} 
X \ket{2}\bra{2}, \\
A_L(X) = & \frac12 \ket{3}\bra{3}X\ket{3}\bra{3} + \frac12 \ket{0}\bra{3} 
X \ket{3}\bra{0}.
\end{split}
\end{equation}
and
\begin{equation}
\begin{split}
B_U(X) = & \alpha \ket{1}\bra{0}X\ket{0}\bra{1} + (1 - \alpha) \ket{3}\bra{0} 
X \ket{0}\bra{3}, \\
B_R(X) = & \frac12 \ket{0}\bra{1}X\ket{1}\bra{0} + \frac12 \ket{2}\bra{1} X 
\ket{1}\bra{2}, \\
B_D(X) = & \alpha \ket{1}\bra{2}X\ket{2}\bra{1} + (1 - \alpha) \ket{3}\bra{2} 
X \ket{2}\bra{3}, \\
B_L(X) = & \frac12 \ket{0}\bra{3}X\ket{3}\bra{0} + \frac12 \ket{2}\bra{3} 
X \ket{3}\bra{2}.
\end{split}
\end{equation}
Although, in this case we assign the type to a vertex randomly with a uniform 
distribution.

The channels formed from Kraus operators $A_x$ and $B_x$ where $x \in {U, R, L, 
D}$ both have a unique invariant state. The behavior 
of the network is presented in the Fig.~\ref{fig:2d-irreducible-walk}. We 
obtain a similar behavior as in the reducible case, although the convergence to 
a Gaussian distribution is slower.
\begin{figure}[!h]
\centering
\subfloat[\label{fig:irreducible-a}]{\includegraphics[width=0.49\textwidth]{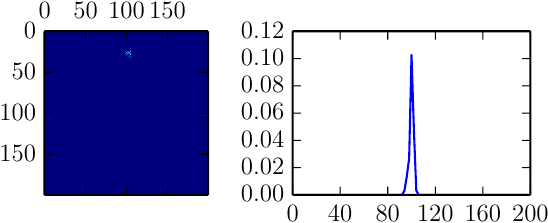}}
\subfloat[\label{fig:irreducible-b}]{\includegraphics[width=0.49\textwidth]{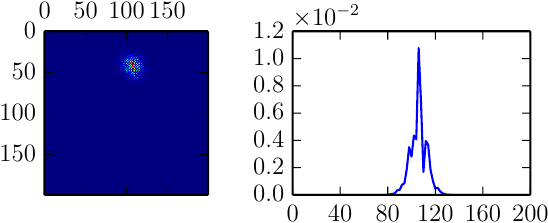}}\\
\subfloat[\label{fig:irreducible-c}]{\includegraphics[width=0.49\textwidth]{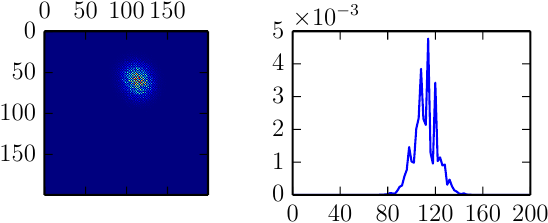}}
\subfloat[\label{fig:irreducible-d}]{\includegraphics[width=0.49\textwidth]{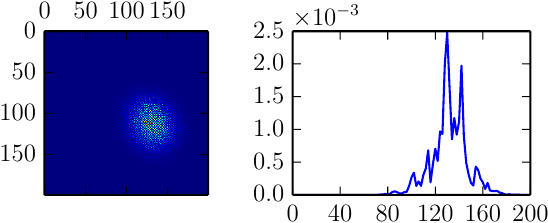}}
\caption{An example of realization of OQW with a random uniform distribution of 
vertex types. The figures show the distribution of the walk and cross section 
through the center for various time steps:
panel~\protect\subref{fig:reducible-a}  $n=10$,
panel~\protect\subref{fig:reducible-b}  $n=100$,
panel~\protect\subref{fig:reducible-c}  $n=200$,
panel~\protect\subref{fig:reducible-d}  $n=500$.}\label{fig:2d-irreducible-walk}
\end{figure}

\section{Conclusions}
The aim of this paper was to provide formulas describing the behavior of the
open quantum walk in the asymptotic limit. We described two cases: networks
that are reducible to the 1-type case and networks with random, uniformly
distributed vertex types. This result allows one to analyze behavior of walks
with a more complex structure compared to the known results. We have illustrated
our claims with numerical examples that show possible applications and
correctness of our theorems. The networks are still restricted to vertices that
exhibits invariant states.

We provided examples showing that the theorems are valid in the case of a 2D
regular lattice with two vertex types. In Section~\ref{sec:example-reducible}
we shown application to the reducible case, when the assignment of vertex
types is regular and translation invariant. Next, in
Section~\ref{sec:example-irreducible} we turned to a random, uniformly
distributed assignment of vertex types.

These theorems can also be applied to the non-lattice graphs. Different types 
of vertices allow also to apply this in the case of graphs with non-constant 
degrees. This may be very useful in modeling complex structures, especially of 
regular definition as in the case of Apollonian networks.

These possibilities are important as open quantum walks with different vertex 
classes have application in quantum biology and dissipative quantum computing.

\section*{Acknowledgments}
We would like to thank Hanna Wojew{\'o}dka for fruitful discussions and a 
critical reading of our manuscript.

Work by {\L}P was supported by the Polish Ministry of Science and Higher
Education under the project number IP2012 051272. PS was supported by the Polish
Ministry of Science and Higher Education within ``Diamond Grant'' Programme
under the project number 0064/DIA/2013/42.
\bibliography{../manuscript}
\bibliographystyle{ieeetr}
\end{document}